\documentclass[11pt]{article}

\usepackage{graphicx} 
\usepackage{dsfont}
\usepackage{epstopdf}
\usepackage{mathrsfs}
\usepackage{amsmath} 
\usepackage{amssymb}  
\usepackage{amsfonts,amsthm}
\usepackage{xcolor}
\usepackage{slashbox}
\usepackage{a4wide,epic}

\graphicspath{{./Figures/}}

\newtheorem{theorem}{Theorem}

\title{\LARGE \bf Is entanglement necessary in the reservoir input?
}

\author{Zibo~Miao,~Yu Chen~and Haidong~Yuan
	\thanks{Z. Miao, Y. Chen and H. Yuan are with the Department of Mechanical and Automation Engineering, The Chinese University of Hong Kong, Shatin, Hong Kong.  Email: {\tt\small zibomiao@cuhk.edu.hk, anschen@link.cuhk.edu.hk, hdyuan@mae.cuhk.edu.hk}}
}

\author{Zibo Miao\footnote{Z. Miao is with the School of Mechanical Engineering and Automation, Harbin Institute of Technology, Shenzhen, China. He was with the Department of Mechanical and Automation Engineering, The Chinese University of Hong Kong, Shatin, Hong Kong (Email: {\tt\small shenwum@gmail.com}).  $^\dagger$Y. Chen and H. Yuan are with the Department of Mechanical and Automation Engineering, The Chinese University of Hong Kong, Shatin, Hong Kong (Email: {\tt\small anschen@link.cuhk.edu.hk; hdyuan@mae.cuhk.edu.hk}).}
,~Yu Chen$^\dagger$~and Haidong~Yuan$^\dagger$}

\date{}
\begin{document}
	
	\maketitle
	
	\begin{abstract}
		In this paper, we continue our investigation on controlling the state of a quantum harmonic oscillator, by coupling it to a reservoir composed of a sequence of qubits. Specifically, we show that sending qubits separable from each other but initialised at different states in pairs can stabilise the oscillator at squeezed states. However, only if entanglement is allowed in the reservoir qubit can we stabilise the oscillator at a wider set of squeezed states. This thus provides a proof for the necessity of involving entanglement in the reservoir qubits input to the oscillator, as regard to the stabilisation of quantum states in the proposed system setting. On the other hand, this system setup can be in turn used to estimate the coupling strength between the oscillator and reservoir qubits. We further demonstrate that entanglement in the reservoir input qubits contributes to the corresponding quantum Fisher information. From this point of view, entanglement is proved to play an indispensable role in the improvement of estimation precision in quantum metrology.
	\end{abstract}

	\section{INTRODUCTION}
	Stabilisation of a quantum system at a desired target state plays a central role in engineering, as it dovetails with various quantum technologies such as quantum sensing \cite{CMMRD01,HR06B,LTP15}. However, in view of the short dynamical time scales, instantaneous output signal analysis and retroaction is inevitably limited in the vast majority of quantum systems. Quantum reservoir engineering, considered as an alternative control approach for quantum state stabilisation, helps us avoid a direct real-time analysis of output signals. In more concrete terms, a reservoir is designed coupled to a quantum system, with the aim of steering the system initialised at arbitrary states to a single target state or a subspace of desired states \cite{ZWH03,WHZ81,SRBR11}.
	
	In this paper, we continue our investigation on the system setup depicted in Fig. \ref{fig:GREng}, in line with our previous studies detailed in \cite{SRBR11,SLBRR12,MS17}. There is a single harmonic oscillator mode of a cavity that is weakly coupled to a sequence of qubits, via the Jaynes-Cummings Hamiltonian. This setup is generally analogous to the Haroche experimental setting \cite{SDZ11}, but no measurement is imposed. Each qubit can be initialised as needed before entering the cavity,  and then it interacts with the stored oscillator mode before exiting the cavity. Finally it will be discarded when the next qubit moves into the cavity.  The stream of qubits thus acts as an engineered reservoir to control the oscillator's quantum state. It is known that a short resonant interaction with a stream of independent, identical, weakly excited qubits can stabilise coherent states inside the cavity, whereas entangled qubits can stabilise e.g.~ squeezed states of the field \cite{SRBR11,SLBRR12,MS17}.
	
	As a further step towards better understanding the necessity of involving entanglement in reservoir input qubits, here we take into account the scenario where the input qubits are initialised at different states, acting as a time-varying quantum reservoir. We explore the beneficial effects of time-varying features, in comparison with embracing entanglement in the reservoir qubits. In particular, we observe that by alternately sending separable reservoir qubits initialised at two different states to the oscillator, squeezed states can be stabilised, while entangled reservoir input qubits can stabilise the oscillator mode at a wider set of squeezed states. These results provide important insight on the necessity of having entanglement in reservoir input.  
	
	\begin{figure}[!htp]
		\centering
		\includegraphics[scale=.6]{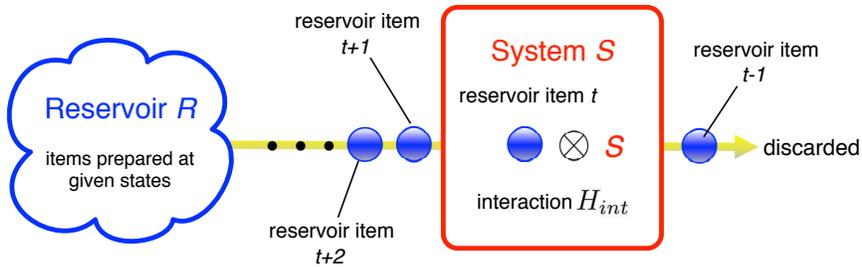}
		\caption{Framework of open-loop quantum reservoir engineering. The aim is to stabilise the system S at a target state by coupling it to another quantum system R, referred to as the reservoir, which is viewed as a stream of input quantum states that are discarded after interaction.}
		\label{fig:GREng}
	\end{figure}
	
	On the other hand, in quantum metrology this system setup can be utilised as an apparatus to estimate the coupling strength between reservoir qubits and the oscillator. Quantum metrology, studying how to obtain higher statistical precision than purely classical approaches by taking advantage of quantum resources, has recently raised much attention.  
	The precision limit of estimating a parameter $x$, encoded in a quantum state $\rho_{x}$ of the system, can be calibrated by the quantum Cram\'er-Rao bound $\delta \hat{x} \ge \frac{1}{\sqrt{nJ_Q}}$ \cite{Holevo82B,Helstrom76B,BC94B}, where $\delta \hat{x}=\sqrt{E[(\hat{x}-x)^2]}$ is the standard deviation of an unbiased estimator $\hat{x}$, and $n$ is the number of repeated experiments. Here $J_Q = \mathrm{tr}[\rho_xL_x^2]$ denotes the quantum Fisher information (QFI) with $L_x$ being the symmetric logarithmic derivative (SLD) defined as the solution to the equation $\frac{\partial \rho_x}{\partial x} = \frac{L_x\rho_x+\rho_xL_x}{2}$. The QFI thus bounds the precision limit, and larger QFI guarantees higher precision. We find that entanglement in reservoir input qubits plays an indispensable role in the improvement of estimation precision. In this regard, the necessity of incorporating entanglement in the reservoir is also explained.

	This paper is organised as follows. We begin in Section \ref{sec:RQIO} by  presenting the mathematical description for reservoir qubits in pairs input to an oscillator mode consecutively. In Section \ref{sec:NIE}, we discuss the necessity of incorporating entanglement in the reservoir qubits from two different perspectives: stabilisation of squeezed states, and enhancement of parameter-estimation. Finally, Section \ref{sec:C} provides some concluding remarks and future research directions.

	{\em Notations}. In this paper $\ast$ is used to indicate the complex conjugate $z^{\ast}=x-iy$ of a complex number $z=x+iy$ (here, $i=\sqrt{-1}$ and $x,y$ are real). Real and imaginary parts are denoted by $\Re\left(z\right)=\frac{z+z^{\ast}}{2}$ and $\Im\left(z\right)=\frac{z-z^{\ast}}{2i}$ respectively.  The Hilbert space adjoin of an operator $\mathbf{X}$ is denoted by $\mathbf{X}^{\dagger}$. The commutator of two operators $\mathbf{X},\mathbf{Y}$ is defined by $\left[\mathbf{X},\mathbf{Y}\right]=\mathbf{X}\mathbf{Y}-\mathbf{Y}\mathbf{X}$. The anticommutator of two operators $\mathbf{X},\mathbf{Y}$ is defined by $\left\{\mathbf{X},\mathbf{Y}\right\}=\mathbf{X}\mathbf{Y}+\mathbf{Y}\mathbf{X}$. The tensor product of operators $\mathbf{X},\mathbf{Y}$ defined on Hilbert spaces $\mathbb{H},\mathbb{G}$ is denoted $\mathbf{X} \otimes \mathbf{Y}$, and is defined on the tensor product Hilbert space $\mathbb{H}\otimes\mathbb{G}$.

	\section{Reservoir qubits input to a harmonic oscillator}
	\label{sec:RQIO}
	
	In the system setting shown in Fig. \ref{fig:GREng},  each qubit interacts with the cavity for a fixed time $t_r$ according to the Jaynes-Cummings Hamiltonian
	\begin{align}
		\mathbf{H}_{JC} = i\frac{\Omega}{2}( | g \rangle \langle e |\mathbf{a}^{\dagger} - | e \rangle \langle g |\mathbf{a})
		\label{eq:ResInt}
	\end{align}
	with $\Omega$ the effective qubit-oscillator coupling strength (Rabi oscillation frequency), $\mathbf{a}$ the oscillator mode's annihilation operator and $| g \rangle, | e \rangle$ the qubit's ground and excited states. The unitary propagator describing one qubit-oscillator interaction is then
	\begin{align}\label{eq:ResInt}
		\mathbf{U}_r = | g \rangle \langle g |\cos\theta_{\mathbf{N}} +  | e \rangle \langle e |\cos\theta_{\mathbf{N+I}}
		- | e \rangle \langle g | \mathbf{a}\frac{\sin\theta_{\mathbf{N}}}{\sqrt{\mathbf{N}}} + | g \rangle \langle e |\frac{\sin\theta_{\mathbf{N}}}{\sqrt{\mathbf{N}}}\mathbf{a}^{\dagger}
	\end{align}
	where
	\begin{align*}
		\theta_{\mathbf{N}} = \theta\sqrt{\mathbf{N}}= \tfrac{1}{2}\Omega t_r \sum_n\sqrt{n}| n \rangle \langle n |, \;
	\end{align*}
	with $\mathbf{N} = \mathbf{a}^\dagger \mathbf{a}$ the photon number operator, $|n\rangle (n=0,1,2,...)$ the Fock states of the harmonic oscillator mode, and $\mathbf{I}$ the identity operator. Please note that our qubit-oscillator system is operating in the weakly coupled regime, and thus $\theta = \tfrac{1}{2}\Omega t_r$ is sufficiently small.
	
	
	Specifically, if the reservoir qubits are initialised at different states, we begin with the study of the qubits sent to the harmonic oscillator in pairs. The two qubits constituting a pair interact sequentially with the oscillator according to \eqref{eq:ResInt}. In order to obtain a Markovian evolution for the oscillator state, we have to keep track of the result of its interaction with the qubit pair.
	One pair is thus regarded as one effective auxiliary system, which undergoes two consecutive Hamiltonian interactions with the oscillator (first Hamiltonian coupling with the subspace corresponding to first qubit, and then with the subspace corresponding to second qubit), and the corresponding propagator is obtained as a straightforward extension of $\mathbf{U_r}$.
	In such a fashion, the initial state of one qubit pair can be written as 
	\begin{align}
		|\psi_{q^2}\rangle(0) = \beta_{gg}|gg\rangle + \beta_{ge}|ge\rangle + \beta_{eg}|eg\rangle + \beta_{ee}|ee\rangle,
		\label{eq:qubitis} 
	\end{align}
	with $\beta_{gg}, \beta_{ge}, \beta_{eg}, \beta_{ee} \in \mathbb{C}$, and $\left|\beta_{gg}\right|^2+\left|\beta_{ge} \right|^2+\left|\beta_{eg} \right|^2+\left|\beta_{ee} \right|^2=1$. The initial state of the oscillator is denoted by $\rho_c(0)$, which is arbitrary.
	The evolution of the oscillator state over these two consecutive interactions can then be described by the Kraus map:
	\begin{align}
		\rho_c(t+1) = \mathbf{M}_{gg} \rho_c(t) \mathbf{M}_{gg}^\dagger + \mathbf{M}_{ge} \rho_c(t) \mathbf{M}_{ge}^\dagger + \mathbf{M}_{eg} \rho_c(t) \mathbf{M}_{eg}^\dagger + \mathbf{M}_{ee} \rho_c(t) \mathbf{M}_{ee}^\dagger .
		\label{eq:rhodyns} 
	\end{align}
	In the basis $ (|g\rangle,|e\rangle)$ for both qubits, the associated operators read:
	\begin{align}
		\mathbf{M}_{gg} &=  \beta_{gg}\cos^{2}\theta_{\mathbf{N}}+\beta_{ge}\cos\theta_{\mathbf{N}}\frac{\sin\theta_{\mathbf{N}}}{\sqrt{\mathbf{N}}}\mathbf{a}^{\dagger}
		+\beta_{eg}\frac{\sin\theta_{\mathbf{N}}}{\sqrt{\mathbf{N}}}\mathbf{a}^{\dagger}\cos\theta_{\mathbf{N}} +\beta_{ee}\frac{\sin\theta_{\mathbf{N}}}{\sqrt{\mathbf{N}}}\mathbf{a}^{\dagger}\frac{\sin\theta_{\mathbf{N}}}{\sqrt{\mathbf{N}}}\mathbf{a}^{\dagger},\label{eq:M2qubits} \nonumber\\
		\mathbf{M}_{ge} &= -\beta_{gg}\cos\theta_{\mathbf{N}}\mathbf{a}\frac{\sin\theta_{\mathbf{N}}}{\sqrt{\mathbf{N}}} +\beta_{ge}\cos\theta_{\mathbf{N}}\cos\theta_{\mathbf{N+I}} -\beta_{eg}\sin^2\theta_{\mathbf{N}} + \beta_{ee}\frac{\sin\theta_{\mathbf{N}}}{\sqrt{\mathbf{N}}}\mathbf{a}^{\dagger}\cos\theta_{\mathbf{N+I}},\nonumber\\
		\mathbf{M}_{eg} &= -\beta_{gg}\mathbf{a}\frac{\sin\theta_{\mathbf{N}}}{\sqrt{\mathbf{N}}}\cos\theta_{\mathbf{N}}
		-\beta_{ge}\sin^2\theta_{\mathbf{N+I}}+\beta_{eg}\cos\theta_{\mathbf{N+I}}\cos\theta_{\mathbf{N}}+\beta_{ee}\cos\theta_{\mathbf{N+I}}\frac{\sin\theta_{\mathbf{N}}}{\sqrt{\mathbf{N}}}\mathbf{a}^{\dagger}, \nonumber \\
		\mathbf{M}_{ee} &= \beta_{gg}\mathbf{a}\frac{\sin\theta_{\mathbf{N}}}{\sqrt{\mathbf{N}}}\mathbf{a}\frac{\sin\theta_{\mathbf{N}}}{\sqrt{\mathbf{N}}}-\beta_{ge}\mathbf{a}\frac{\sin\theta_{\mathbf{N}}}{\sqrt{\mathbf{N}}}\cos\theta_{\mathbf{N+I}}-\beta_{eg}\cos\theta_{\mathbf{N+I}}\mathbf{a}\frac{\sin\theta_{\mathbf{N}}}{\sqrt{\mathbf{N}}}+\beta_{ee}\cos^{2}\theta_{\mathbf{N+I}}.
	\end{align}
	
	We recall the approximate Lindblad master equation in \cite{MS17} characterising the dynamics of this system setting with $\theta$ sufficiently small.  We expand this Kraus map to the second order in $\theta$, and we observe that it appears to be the discretisation of a Lindblad master equation with three dissipation channels:
	\begin{align}
		\tfrac{d}{d\tau}\rho_c(\tau) =-i\left[\mathbf{H},\rho_c(\tau)\right]+\sum_{j=1}^3\mathcal{L}\left(\mathbf{L}_{j}\right)\rho_c\left(\tau\right),
		\label{eq:LindM}
	\end{align}
	with the dissipation super-operator $\mathcal{L}(\mathbf{L})\rho_c=\mathbf{L}\rho_c\mathbf{L}^{\dagger}-\frac{1}{2}(\mathbf{L}^{\dagger}\mathbf{L}\rho_c+\rho_c \mathbf{L}^{\dagger}\mathbf{L})$, and operators
	\begin{align}
		\mathbf{H} &= -i\theta\left(\mathbf{Q}-\mathbf{Q}^{\dagger}\right),\nonumber\\
		\mathbf{Q} &=\left[\beta_{gg}\left(\beta_{ge}^{\ast}+\beta_{eg}^{\ast}\right)+\beta_{ee}^{\ast}\left(\beta_{ge}+\beta_{eg}\right)\right]\mathbf{a},\nonumber\\
		\mathbf{L}_1 &= \sqrt{2}\theta\left(\beta_{gg}\mathbf{a}  - \beta_{ee}\mathbf{a}^\dagger\right),\nonumber\\
		\mathbf{L}_2 &= \theta\left(\beta_{ge}+\beta_{eg}\right)\mathbf{a},\nonumber\\
		\mathbf{L}_3 &= \theta\left(\beta_{ge}+\beta_{eg}\right)\mathbf{a}^\dagger \; .
		\label{eq:2appie}
	\end{align}
	The a-dimensional time $\tau$ corresponds to the duration of the interaction with one pair of qubits. The decoherence operators $\mathbf{L}_2$ and $\mathbf{L}_3$ describe a purely thermal bath at infinite temperature; this simply has the effect of stabilising a high-energy thermal mixture of coherent states. However, by taking $\left|\beta_{ge} + \beta_{eg}\right|$ sufficiently small, we can make the coupling to this thermal reservoir negligible leaving the dominant terms $\mathbf{L}_1$ and $\mathbf{H}$. In this regime, the reservoir qubits can be prepared to stabilise a minimum-uncertainty squeezed state \cite{MS17}
	\begin{align*}
		|\alpha,\zeta=r e^{i\phi_r}\rangle & =\mathbf{D}\left(\alpha\right) \mathbf{S}\left(\zeta\right) |0\rangle \quad \text{where}\\
		& \mathbf{D}\left(\alpha\right)=\exp\left(\alpha \mathbf{a}^{\dagger}-\alpha^{\ast}\mathbf{a}\right) \; ,\\
		& \mathbf{S}\left(\zeta\right)=\exp\left(\tfrac{1}{2}(\zeta^{\ast}\mathbf{a}^2-\zeta(\mathbf{a}^{\dagger})^{2})\right)
	\end{align*}
	are respectively the displacement operator by $\alpha = |\alpha|e^{i\phi_\alpha} \in \mathbb{C}$ and the squeezing of the vacuum by $|\zeta|=r$ in the direction characterised by the angle $\phi_r$. That is, denoting $\mathbf{X_\phi} = \frac{\mathbf{a}e^{i\phi}+\mathbf{a}^\dagger e^{-i\phi}}{2}$ the oscillator quadrature in direction $\phi$ (we follow the conventional forms in \cite{HR06B}), the corresponding variances satisfy
	\begin{align*}
		& (\Delta\mathbf{X}_{\frac{\phi_r}{2}}) (\Delta\mathbf{X}_{\frac{\phi_r+\pi}{2}}) = \frac{1}{4}, \\
		& (\Delta\mathbf{X}_{\frac{\phi_r}{2}}) = \tfrac{1}{2}\, e^{-r}
	\end{align*}
	for $|\psi\rangle = \mathbf{S}\left(r\right) |0\rangle$. Therefore, such states saturate the Heisenberg uncertainty inequality, with less uncertainty on $\mathbf{X}_{\phi_r/2}$ than a classical-like state such as the vacuum $|0\rangle$.

	\section{The necessity of having entanglement in the reservoir input qubits}
	\label{sec:NIE}
	
	\subsection{Stabilisation of squeezed states: entangled input vs. separable input}
	
	Concerning the stabilisation problem, in \cite{MS17} we focus on entangled reservoir input qubits, as regard to the stabilisation of highly squeezed states of the oscillator instead of coherent states. In this section, we will further explore why we are in need of entanglement in the reservoir input qubits considered in pairs. First, we notice that time-varying reservoir input qubits can also stabilise squeezed states of the oscillator.
	
	\begin{figure}[!htp]
		\centering
		\includegraphics[scale=.6]{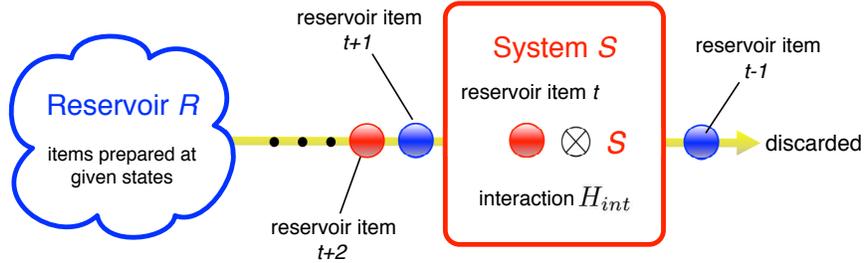}
		\caption{Framework of open-loop quantum reservoir engineering with qubits input considered in pairs. Entanglement may be involved in a qubit pair. Specifically, in the time-varying separable input case, the first and second qubits are prepared at different states, say $|\psi_q,+\rangle$ and $|\psi_q,-\rangle$. Then the third and fourth qubits are initialised at $|\psi_q,+\rangle$ and $|\psi_q,-\rangle$ respectively, et cetera.}
		\label{fig:GREngpair}
	\end{figure}
	
	From \cite{SRBR11} we are convinced that if each qubit is prepared at the same state $|\psi_q \rangle = \cos u |g\rangle + e^{i\chi}\sin u  |e\rangle$, with $u\in(0,\pi/4)$ and $\chi \in [-\pi,\pi)$,  the oscillator will be stabilised at the coherent state $|\psi_c \rangle = |\alpha \rangle = \frac{2u}{\theta}e^{i\chi}$. By contrast, if we initialise one pair of qubits in the reservoir at the states $\cos u |g\rangle + \sin u |e\rangle$ and $\cos u |g\rangle - \sin u |e\rangle$ alternately with $u\in (0,\pi/4)$, as shown in Fig. \ref{fig:GREngpair}, we find that the oscillator will be stabilised at the squeezed state $|0,re^{i\pi}\rangle$ where $\tanh r =  \tan^2 u$. And the corresponding convergence rate is $\kappa  = 2\theta^2 \cos 2u$. However, in this simple example, only squeezed vacuum states can be stabilised. We are now interested to see whether a larger set of squeezed states can be stabilised using entangled input than separable input. Before moving to the detailed proof for the necessity of having entanglement in the input, we provide a generalisation of Theorem 1 in \cite{MS17}, in terms of stabilising the oscillator at squeezed states.  
	
	With the aim of stabilising the oscillator at a steady state, it is not surprising to prepare each pair of qubits  at an identical joint state. As discussed in  \ref{sec:RQIO}, we require that  $
	\beta_{ge}+\beta_{eg} = \epsilon (\left|\epsilon \right| \ll 1)$, and $\left|\beta_{ee}\right| < \left|\beta_{gg}\right| $ to ensure convergence of the oscillator's state  \cite{MS17}.
	\begin{theorem}
		Consider the Lindblad master equation
		\begin{align}
			\tfrac{d}{dt}\rho_c(\tau) =-i\left[\mathbf{H},\rho_c(\tau)\right]+\mathcal{L}\left(\mathbf{L}_{1}\right)\rho_c\left(\tau\right)
			\label{eq:4thm1}
		\end{align}
		which describes, according to approximations just discussed, the engineered reservoir obtained through resonant interaction of a harmonic oscillator with a stream of consecutive qubit pairs initialised in the state \eqref{eq:qubitis} before interaction. This Lindblad master equation stabilises the squeezed state $|\alpha,\zeta = re^{i\phi_r} \rangle$ provided we initialise the qubit pairs as
		\begin{align*}
			& \beta_{ge}+\beta_{eg} = \epsilon \;(\left|\epsilon \right| \ll 1) ,\\
			& \left|\beta_{ee}\right| < \left|\beta_{gg}\right|.
		\end{align*}
		with the parameters tuned as:
		\begin{align}
			\tanh(-r)&=\frac{\left|\beta_{ee}\right|}{\left|\beta_{gg}\right|}, \label{eq:findr}\\
			\phi_{r}&=\phi_{ee}-\phi_{gg} \; \left(\mathrm{mod} \; 2\pi\right),\label{eq:findpr} \\
			\alpha&=\frac{\epsilon\beta_{gg}^{\ast}+\epsilon^{\ast}\beta_{ee}}{\theta\left(\left|\beta_{gg}\right|^{2}-\left|\beta_{ee}\right|^{2}\right)}.\label{eq:finda}
		\end{align}
		The convergence rate towards $|\alpha,\zeta = re^{i\phi_r} \rangle$ is
		\begin{align}
			\kappa = 2\theta^2\left(\left|\beta_{gg}\right|^{2}-\left|\beta_{ee}\right|^{2}\right).
			\label{eq:findka}
		\end{align}
		\label{thm:sta}
	\end{theorem}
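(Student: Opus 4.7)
The plan is to reduce the stabilisation of $|\alpha,\zeta\rangle$ to the standard problem of stabilising the vacuum, by conjugating the whole Lindbladian with the displacement--squeezing unitary $\mathbf{U} = \mathbf{D}(\alpha)\mathbf{S}(\zeta)$. Since $|\alpha,\zeta\rangle = \mathbf{U}|0\rangle$ and the Lindblad dissipator is covariant under unitary conjugation, the pair $(\mathbf{H},\mathbf{L}_1)$ stabilises $|\alpha,\zeta\rangle\langle\alpha,\zeta|$ if and only if the transformed pair $(\tilde{\mathbf{H}},\tilde{\mathbf{L}}_1) = (\mathbf{U}^{\dagger}\mathbf{H}\mathbf{U},\mathbf{U}^{\dagger}\mathbf{L}_1\mathbf{U})$ stabilises $|0\rangle\langle 0|$. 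This reformulation, already at the heart of the proof of Theorem~1 in \cite{MS17}, splits the verification into three sub-tasks: eliminating the $\mathbf{a}^{\dagger}$ component of $\tilde{\mathbf{L}}_1$ (which fixes $r$ and $\phi_r$), cancelling the residual off-diagonal matrix elements (which fixes $\alpha$), and reading off the convergence rate from what remains.

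For the first task I would expand $\tilde{\mathbf{L}}_1$ using the Bogoliubov identities $\mathbf{S}^{\dagger}(\zeta)\mathbf{a}\mathbf{S}(\zeta) = \mathbf{a}\cosh r - e^{i\phi_r}\mathbf{a}^{\dagger}\sinh r$ and $\mathbf{D}^{\dagger}(\alpha)\mathbf{a}\mathbf{D}(\alpha) = \mathbf{a}+\alpha$, together with their Hermitian conjugates. After substitution, the coefficient of $\mathbf{a}^{\dagger}$ in $\tilde{\mathbf{L}}_1$ comes out to $-\sqrt{2}\theta(\beta_{gg}e^{i\phi_r}\sinh r + \beta_{ee}\cosh r)$. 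Setting this to zero and separating modulus from phase yields precisely the relations (\ref{eq:findr}) and (\ref{eq:findpr}); the minus sign in $\mathbf{L}_1 = \sqrt{2}\theta(\beta_{gg}\mathbf{a} - \beta_{ee}\mathbf{a}^{\dagger})$ is what produces the ``$-r$'' on the left of (\ref{eq:findr}). Once these relations hold, $\tilde{\mathbf{L}}_1$ collapses to $\tilde{k}\mathbf{a} + \tilde{c}$, with $\tilde{k} = \sqrt{2}\theta\beta_{gg}/\cosh r$ and $\tilde{c} = \sqrt{2}\theta(\beta_{gg}\alpha - \beta_{ee}\alpha^{\ast})$.

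For the second task I would examine the off-diagonal $(n,0)$ matrix elements of $-i[\tilde{\mathbf{H}},|0\rangle\langle 0|] + \mathcal{L}(\tilde{\mathbf{L}}_1)|0\rangle\langle 0|$. Only the $(1,0)$ element is non-trivial: the dissipator contributes $-\tfrac{1}{2}\tilde{k}^{\ast}\tilde{c}$ while the transformed Hamiltonian contributes $\theta\tilde{q}^{\ast}$, with $\tilde{q} = q\cosh r + q^{\ast}e^{-i\phi_r}\sinh r$ and $q = \beta_{gg}\epsilon^{\ast} + \beta_{ee}^{\ast}\epsilon$. Requiring cancellation and substituting $e^{i\phi_r}\sinh r = -\beta_{ee}\cosh r/\beta_{gg}$ from the first task reduces everything to the single complex equation $\epsilon = \theta(\beta_{gg}\alpha - \beta_{ee}\alpha^{\ast})$; pairing it with its complex conjugate gives a $2\times 2$ linear system whose determinant is $|\beta_{gg}|^2 - |\beta_{ee}|^2 > 0$, and whose inversion is exactly (\ref{eq:finda}).

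For the convergence rate I would note that, once $\alpha$ is chosen as above, the $c$-number $\tilde{c}$ is effectively absorbed by the Hamiltonian and the residual transformed dynamics is governed by the standard vacuum-cooling dissipator $\mathcal{L}(\tilde{k}\mathbf{a})$, which drives $\tilde{\rho}\to|0\rangle\langle 0|$ at exponential rate $|\tilde{k}|^2 = 2\theta^2|\beta_{gg}|^2/\cosh^2 r = 2\theta^2(|\beta_{gg}|^2 - |\beta_{ee}|^2)$, matching (\ref{eq:findka}). The main obstacle I foresee is the careful phase and sign bookkeeping through the Bogoliubov substitutions --- in particular the convention that produces $\tanh(-r)$ in (\ref{eq:findr}) rather than $\tanh r$ --- together with the check that the small-$|\epsilon|$ regime under which (\ref{eq:LindM}) was derived in \cite{MS17} remains compatible with the $\alpha$ prescribed in (\ref{eq:finda}); the remainder of the argument is algebra that closely parallels the pure-entangled-input analysis of \cite{MS17}.
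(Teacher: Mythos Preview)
Your proposal is correct and follows essentially the same route as the paper: conjugate the Lindbladian by $\mathbf{D}(\alpha)\mathbf{S}(\zeta)$, use the Bogoliubov identities for $\mathbf{S}^\dagger\mathbf{a}\mathbf{S}$ and $\mathbf{D}^\dagger\mathbf{a}\mathbf{D}$, and reduce the problem to the standard vacuum-cooling dissipator $\kappa\,\mathcal{L}(\mathbf{a})$. The only stylistic difference is that you \emph{derive} the parameter relations (\ref{eq:findr})--(\ref{eq:finda}) by successively killing the $\mathbf{a}^\dagger$ coefficient of $\tilde{\mathbf{L}}_1$ and the $(1,0)$ matrix element, whereas the paper simply writes down the transformed $\tilde{H}$ and $\tilde{L}_1$ explicitly and then \emph{verifies} that substituting the stated parameters collapses the dynamics to $\kappa\,\mathcal{L}(\mathbf{a})\tilde{\rho}_c$; the algebra and the identification $|\tilde{k}|^2 = 2\theta^2|\beta_{gg}|^2/\cosh^2 r = \kappa$ are identical.
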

	
	\begin{proof}
		Apply the transformation
		\begin{align}
			\tilde{\rho}_c (t)= \mathbf{S}^\dagger(\zeta)\mathbf{D}^\dagger(\alpha)\rho_c(t) \mathbf{D}(\alpha)\mathbf{S}(\zeta).
			\label{eq:Rhotran}
		\end{align}
		By using the properties
		\begin{align*}
			\mathbf{D}^\dagger(\alpha)\mathbf{a}\mathbf{D}(\alpha) &= \mathbf{a} + \alpha,\nonumber\\
			\mathbf{S}^\dagger(re^{i\phi_r})\mathbf{a}\mathbf{S}(re^{i\phi_r}) &= \mathbf{a}\cosh r - e^{i\phi_r}\mathbf{a}^\dagger \sinh r, 
		\end{align*}
		and applying the transformation
		\begin{align}
			\tilde{\rho}_c (t)= \mathbf{S}^\dagger(\zeta)\mathbf{D}^\dagger(\alpha)\rho_c(t) \mathbf{D}(\alpha)\mathbf{S}(\zeta),
			\label{eq:Rhotran}
		\end{align}
		we obtain that the corresponding Lindblad master equation for $\tilde{\rho}_c (t)$ is dominated by the following Hamiltonian and coupling operators
		\begin{align}
			\tilde{H} &= -i\theta \times \nonumber\\
			&\left\{ \left[\left(\epsilon^{\ast}\beta_{gg}+\epsilon\beta_{ee}^{\ast}\right)\cosh r+\left(\epsilon\beta_{gg}^{\ast}+\epsilon^{\ast}\beta_{ee}\right)e^{-i\phi_{r}}\sinh r\right]\mathbf{a}\right.\nonumber\\
			&\left.-\left[\left(\epsilon^{\ast}\beta_{gg}+\epsilon\beta_{ee}^{\ast}\right)e^{i\phi_{r}}\sinh r+\left(\epsilon\beta_{gg}^{\ast}+\epsilon^{\ast}\beta_{ee}\right)\cosh r\right]\mathbf{a}^{\dagger}\right.\nonumber\\
			&\left.+\left(\epsilon^{\ast}\beta_{gg}+\epsilon\beta_{ee}^{\ast}\right)\alpha+\left(\epsilon\beta_{gg}^{\ast}-\epsilon^{\ast}\beta_{ee}\right)\alpha^{\ast}\right\},\nonumber\\
			\tilde{L}_{1}&=\sqrt{2}\theta\left[\left(\beta_{gg}\cosh r+\beta_{ee}e^{-i\phi_{r}}\sinh r\right)\mathbf{a}-\left(\beta_{gg}e^{i\phi_{r}}\sinh r+\beta_{ee}\cosh r\right)\mathbf{a}^{\dagger}+\beta_{gg}\alpha-\beta_{ee}\alpha^{\ast}\right]. \nonumber
		\end{align}
		Plugging in the initial parameters for qubit pairs provided in the statement, one can have
		\begin{align}
			\tfrac{d}{d\tau}\tilde{\rho}_c\left(\tau\right) = \kappa\; \mathcal{L}(\mathbf{a}) \tilde{\rho}_c(\tau),
			\label{eq:LindMvac}
		\end{align}
		where
		\begin{align*}
			\kappa = 2\theta^2\left(\left|\beta_{gg}\right|^{2}-\left|\beta_{ee}\right|^{2}\right).
		\end{align*}
		This equation stabilises $\tilde{\rho}_c$ towards the vacuum state at the rate $\kappa$. The converse change of variables yields that the oscillator is stabilised at the state $|\alpha,\zeta = re^{i\phi_r} \rangle$ as detailed in the statement.
		
	\end{proof}

	In Theorem \ref{thm:sta}, entanglement in the reservoir input qubits is not required. And please note that if and only if $\beta_{gg}\beta_{ee}=\beta_{ge}\beta_{eg}$, there is no entanglement in the reservoir input qubits.

	The following theorem illustrates the advantage and necessity of introducing entanglement in the reservoir input.
	\begin{theorem}
		\label{thm:stedis}
		To stabilise the oscillator at a squeezed state $|\alpha,\zeta = re^{i\phi_r} \rangle$ given in Theorem \ref{thm:sta}, using the specific setup depicted in Fig. \ref{fig:GREngpair}, entangled input enables us to stabilise a strictly larger set of squeezed states than separable input. In more concrete terms, with a given $\theta$,
		the same $ \epsilon = \beta_{ge}+\beta_{eg}$, and $\mu = \frac{|\beta_{gg}|}{|\beta_{ee}|}$, we have that
		\begin{align*}
			\max|\alpha|_{ent} \geq \frac{\left|\epsilon\right|}{\theta\sqrt{1-\left|\epsilon\right|^2}\left(\mu-1\right)} >  \max|\alpha|_{sep},
		\end{align*}
		where $\max|\alpha|_{ent}$ and $\max|\alpha|_{sep}$ denote the largest amplitudes that can be achieved using entangled and separable input respectively.
	\end{theorem}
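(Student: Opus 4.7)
The plan is to start from the explicit formula $\alpha = (\epsilon\beta_{gg}^{\ast}+\epsilon^{\ast}\beta_{ee})/[\theta(|\beta_{gg}|^{2}-|\beta_{ee}|^{2})]$ provided by Theorem \ref{thm:sta}, and to optimise $|\alpha|$ over all admissible qubit-pair initial states \eqref{eq:qubitis} with $\theta$, $\epsilon = \beta_{ge}+\beta_{eg}$, and $\mu = |\beta_{gg}|/|\beta_{ee}|$ held fixed, subject to the normalisation $|\beta_{gg}|^{2}+|\beta_{ge}|^{2}+|\beta_{eg}|^{2}+|\beta_{ee}|^{2}=1$. Writing $b=|\beta_{ee}|$ so that $|\beta_{gg}|=\mu b$ and saturating the triangle inequality in the numerator by aligning the phases of $\beta_{gg}$ and $\beta_{ee}$ with $\epsilon$, one obtains the working bound $|\alpha|\le |\epsilon|/[\theta b(\mu-1)]$, so the whole problem reduces to bounding $b$ from below in each of the two scenarios and exhibiting entangled inputs that saturate or beat the resulting separable threshold.

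For the separable side, the ansatz $|\psi_{q^{2}}\rangle = |\psi_{q,+}\rangle \otimes |\psi_{q,-}\rangle$ makes the coefficients factorise, which immediately gives the algebraic identity $\beta_{ge}\beta_{eg}=\beta_{gg}\beta_{ee}$ and in particular $|\beta_{ge}||\beta_{eg}|=\mu b^{2}$. Combining the parallelogram identity $|\beta_{ge}|^{2}+|\beta_{eg}|^{2}=|\epsilon|^{2}-2\Re(\beta_{ge}\beta_{eg}^{\ast})\le |\epsilon|^{2}+2\mu b^{2}$ with the normalisation $b^{2}(\mu^{2}+1)+|\beta_{ge}|^{2}+|\beta_{eg}|^{2}=1$ yields a lower bound of the form $b\ge \sqrt{1-|\epsilon|^{2}}/(\mu+1)$, which upon substitution into $|\alpha|\le |\epsilon|/[\theta b(\mu-1)]$ gives an upper bound on $\max|\alpha|_{sep}$ that lies strictly below $|\epsilon|/[\theta\sqrt{1-|\epsilon|^{2}}(\mu-1)]$ in the working regime $|\epsilon|\ll 1$.

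For the entangled side, the constraint $\beta_{ge}\beta_{eg}=\beta_{gg}\beta_{ee}$ is dropped and the AM--GM floor $|\beta_{ge}||\beta_{eg}|\ge \mu b^{2}$ no longer applies, so $b$ can be pushed strictly below the separable threshold. I would exhibit an explicit one-parameter entangled family, for instance $\beta_{ge}=\epsilon/2+i\eta$ and $\beta_{eg}=\epsilon/2-i\eta$ with $\eta\in\mathbb{R}$, for which $\beta_{ge}+\beta_{eg}=\epsilon$ is preserved while $|\beta_{ge}|^{2}+|\beta_{eg}|^{2}=|\epsilon|^{2}/2+2\eta^{2}$ is tuneable, and adjust $\eta$ together with the common phase of $\beta_{gg}$ and $\beta_{ee}$ until $|\alpha|$ attains the stated benchmark $|\epsilon|/[\theta\sqrt{1-|\epsilon|^{2}}(\mu-1)]$, thereby establishing the left inequality. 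The main obstacle I anticipate is the phase bookkeeping on the separable side: because $\beta_{ge}\beta_{eg}=\beta_{gg}\beta_{ee}$ entwines the arguments of $\beta_{gg},\beta_{ee}$ with those of $\beta_{ge},\beta_{eg}$, one must check that no separable choice can simultaneously saturate the triangle inequality in the numerator of $\alpha$ and the lower bound on $b$ derived above, and it is precisely this incompatibility that produces the strict inequality on the right.
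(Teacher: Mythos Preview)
Your plan is in line with the paper's proof: both start from the formula for $\alpha$ in Theorem~\ref{thm:sta}, invoke the product-state identity $\beta_{gg}\beta_{ee}=\beta_{ge}\beta_{eg}$ for separable pairs, and pin the strict inequality on the incompatibility between the phase choices that would simultaneously maximise the numerator of $\alpha$ and minimise $|\beta_{ee}|$. Your closing paragraph correctly isolates this phase obstruction as the crux, and that is exactly the mechanism the paper exploits.

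There is, however, a computational slip in your separable bound. Substituting your lower bound $b\ge\sqrt{1-|\epsilon|^{2}}/(\mu+1)$ into $|\alpha|\le|\epsilon|/[\theta b(\mu-1)]$ gives
\[
|\alpha|\ \le\ \frac{|\epsilon|\,(\mu+1)}{\theta\sqrt{1-|\epsilon|^{2}}\,(\mu-1)},
\]
which sits a factor $(\mu+1)>1$ \emph{above} the target $|\epsilon|/[\theta\sqrt{1-|\epsilon|^{2}}(\mu-1)]$, not below it. Consequently, merely checking that the two saturation conditions (triangle inequality in the numerator, equality in the $b$-bound) are mutually exclusive would only yield $\max|\alpha|_{sep}$ strictly below $(\mu+1)$ times the benchmark, which is weaker than the stated inequality. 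The paper avoids decoupling the two pieces: it writes $|\alpha|_{sep}^{2}$ directly as a product of two trigonometric factors in the angles $\phi_{gg}+\phi_{ee}-2\phi_{\epsilon}$ and $\phi_{ge}-\phi_{eg}$, and then uses the separable phase relation $\phi_{gg}+\phi_{ee}=\phi_{ge}+\phi_{eg}\ (\mathrm{mod}\,2\pi)$ together with an explicit expression for $\phi_{\epsilon}$ to tie these two angles rigidly to one another. To repair your argument you should keep the phases coupled throughout rather than bounding $b$ and the numerator independently; otherwise the extra $(\mu+1)$ factor cannot be removed.
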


	\begin{proof}
		According to Theorem \ref{thm:sta}, we know $\mu = \frac{|\beta_{gg}|}{|\beta_{ee}|}> 1$. It is not difficult to obtain that
		\begin{align*}
			|\alpha|^2 = \frac{\left|\epsilon\right|^{2}\left(1+\mu^{2}+2\mu\cos\left(\phi_{gg}+\phi_{ee}-2\phi_{\epsilon}\right)\right)}{\theta^{2}\left(\mu^{2}-1\right)^{2}\left|\beta_{ee}\right|^{2}}.
		\end{align*}
		In the separable case, we further have
		\begin{align*}
			|\epsilon|^2 = 1 - \left(\mu^2+1\right)|\beta_{ee}|^2 + 2\mu\cos\left(\phi_{ge}-\phi_{eg}\right)|\beta_{ee}|^2,
		\end{align*}
		which is equivalent to
		\begin{align*}
			|\beta_{ee}|^2 = \frac{1-|\epsilon|^2}{1+\mu^2-2\mu\cos\left(\phi_{ge}-\phi_{eg}\right)},
		\end{align*}
		where we make use of the fact
		\begin{align*}
			|\beta_{gg}||\beta_{ee}|=|\beta_{ge}||\beta_{eg}|.
		\end{align*}
		Therefore, the amplitude of the steady state can be rewritten as a function of $|\epsilon|$, $\phi_{gg}$, $\phi_{ee}$, $\phi_{eg}$ and $\phi_{ge}$, i.e.,
		\begin{align*}
			|\alpha|^2_{sep} = \frac{\left|\epsilon\right|^{2}}{1-\left|\epsilon\right|^{2}}\frac{\left(1+\mu^{2}+2\mu\cos\left(\phi_{gg}+\phi_{ee}-2\phi_{\epsilon}\right)\right)}{\theta\left(\mu^{2}-1\right)} \frac{\left(1+\mu^{2}-2\mu\cos\left(\phi_{ge}-\phi_{eg}\right)\right)}{\theta\left(\mu^{2}-1\right)}.
		\end{align*}
		For any $|\epsilon| \ll 1$, in order to obtain the maximal $|\alpha|$, one has to require that
		\begin{align}
			\phi_{ge} - \phi_{eg} &= \pi \; (\mathrm{mod} \; 2\pi),\label{eq:sepmax1}\\
			\phi_{gg} + \phi_{ee} - 2\phi_{\epsilon} &= 0  \; (\mathrm{mod} \; 2\pi).\label{eq:sepmax2}
		\end{align}
		Moreover, because
		$\tan \phi_{\epsilon} = \frac{\left|\beta_{ge}\right|\sin\phi_{ge}+\left|\beta_{eg}\right|\sin\phi_{eg}}{\left|\beta_{ge}\right|\cos\phi_{ge}+\left|\beta_{eg}\right|\cos\phi_{eg}}$,
		one can conclude that 
		\begin{align}
			\phi_{\epsilon} = \phi_{eg}  \;(\mathrm{mod} \; 2\pi).
			\label{eq:sepep}
		\end{align}
		
		However, in the separable case, it must be satisfied that
		\begin{align}
			\phi_{gg} + \phi_{ee} - \phi_{ge} - \phi_{eg} = 0 \;(\mathrm{mod} \; 2\pi),
			\label{eq:sepph}
		\end{align}
		and thus from equations  \eqref{eq:sepmax1},\eqref{eq:sepep},\eqref{eq:sepph},  we have
		\begin{align}
			\phi_{gg} + \phi_{ee} - 2\phi_{\epsilon} = \phi_{ge} + \phi_{eg} - 2\phi_{\epsilon} = \pi \;(\mathrm{mod} \; 2\pi),
		\end{align}
		which contradicts with the condition in equation~\eqref{eq:sepmax2}.
		
		Hence, in the absence of entanglement in the reservoir input, it is obvious that
		\begin{align}
			\max|\alpha|_{sep} < \frac{\left|\epsilon\right|}{\theta\sqrt{1-\left|\epsilon\right|^2}\left(\mu-1\right)}.
		\end{align}
		
		By contrast, if we allow for entanglement in the reservoir input qubits, there is no such a constraint that $\phi_{gg} + \phi_{ee} - \phi_{ge} - \phi_{eg} = 0 (\mod 2\pi)$. Therefore, even if we still choose $|\beta_{gg}||\beta_{ee}|=|\beta_{ge}||\beta_{eg}|$, we can make $\phi_{gg} + \phi_{ee} - 2\phi_{\epsilon} = 0 (\mod 2\pi)$ hold in the entangled case. That is to say,
		\begin{align*}
			\max|\alpha|_{ent} = \frac{\left|\epsilon\right|}{\theta\sqrt{1-\left|\epsilon\right|^2}\left(\mu-1\right)}
		\end{align*}
		under the conditions mentioned above.
		
		Furthermore, if we choose $|\beta_{gg}||\beta_{ee}|-|\beta_{ge}||\beta_{eg}|<0$, it can be easily verified that
		\begin{align*}
			\max|\alpha|_{ent} > \frac{\left|\epsilon\right|}{\theta\sqrt{1-\left|\epsilon\right|^2}\left(\mu-1\right)}.
		\end{align*}
		It can thus be concluded that
		\begin{align}
			\max|\alpha|_{ent} \geq \frac{\left|\epsilon\right|}{\theta\sqrt{1-\left|\epsilon\right|^2}\left(\mu-1\right)}.
		\end{align}
		One can further define $\left|\epsilon^o\right|=\max_{\left|\epsilon\right|}\left\{ \epsilon = \beta_{ge}+\beta_{eg}\mid \right.$
		the system (6) can be stabilised at a squeezed state$\left.\right\}$.
		The maximal amplitudes that are reachable in the entangled input case, denoted by $|\alpha^o|_{ent}$, and in the separable input case, denoted by  $|\alpha^o|_{sep}$, satisfy the following inequality
		\begin{align*}
			|\alpha^o|_{ent} \geq \frac{\left|\epsilon^o\right|}{\theta\sqrt{1-\left|\epsilon^o\right|^2}\left(\mu-1\right)}>|\alpha^o|_{sep}.
		\end{align*}

	\end{proof}
	
	From Theorem \ref{thm:stedis} we know $\theta$ and $\mu$ determine the convergence rate, and the squeezing strength is totally determined by $\mu$. With a strong squeezing strength ($\mu\rightarrow1^{+}$), both theoretically and practically $\theta$ should not be overly small. In fact, it is thus reasonable to bound $\theta$ below as $\min \theta = \theta^o$. In Fig. \ref{fig:EntLAmp}, we illustrate Theorem \ref{thm:stedis} by comparing the amplitude of a squeezed state stabilised by separable input to that stabilised by entangled input in terms of  Wigner quasi-probability distribution. 
	\begin{figure}[!htp]
		\centering
		\includegraphics[scale=.6]{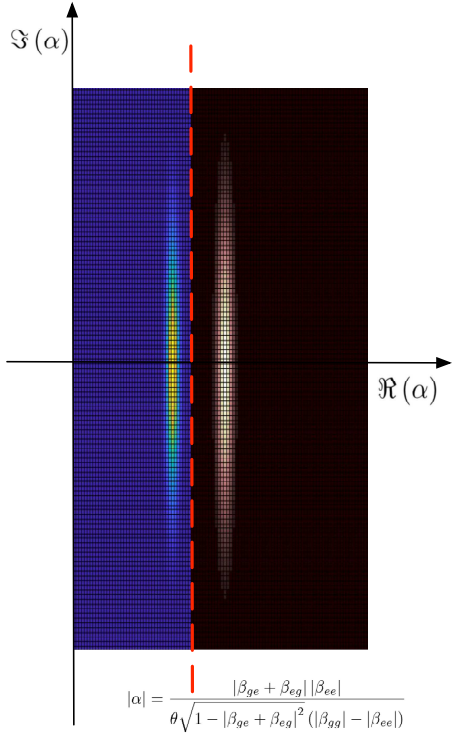}
		\caption{A comparison between the amplitudes of squeezed states (Wigner quasi-probability distribution) stabilised by entangled and separable reservoir input, with the same squeezing strength and given $\theta$. The coordinates are located along the squeezing direction and shifted such that $\Im(\alpha) = 0$. 
			The dashed line corresponds to the boundary that a squeezed state stabilised by separable qubits cannot reach. Only by entangling the reservoir input qubits can we stabilise a squeezed state on the right hand side of the dashed line.}
		\label{fig:EntLAmp}
	\end{figure}

\subsection{Estimation of the parameter $\theta$: entangled input vs. separable input}
Inspired by the work in \cite{EMD11,ARTG18}, our system setup shown in Fig. \ref{fig:GREngpair} is a very good candidate for parameter estimation (e.g. coupling strength estimation). However, in this setting, we do not have to impose measurement on the qubits which is intrinsically different from that in \cite{ARTG18}. Furthermore, there is no requirement for the initial state of the oscillator. As stated in Theorem \ref{thm:stedis}, the oscillator will be regulated to a squeezed state $\rho_c = |\alpha,re^{i\phi_r} \rangle  \langle\alpha,re^{i\phi_r}|$ in the weakly coupled regime (namely $\theta = \tfrac{1}{2}\Omega t_r$ is sufficiently small). Now in particular we would like to estimate the coupling strength characterised by $\theta$, as an application of our stabilisation result in quantum metrology. In the following part we write the steady state $\rho_c = |\alpha,re^{i\phi_r} \rangle  \langle\alpha,re^{i\phi_r}|$ as $\rho_c(\theta)$.
	
The local precision limit of estimating $\theta$ from the output state $\rho_c(\theta)$ is related to the Bures distance between two neighbouring quantum states $\rho_c(\theta)$ and $\rho_c(\theta+d\theta)$ as \cite{BC94B}
\begin{eqnarray}
		\label{eq:Bures}
		d^2_{Bures}[\rho_c(\theta),\rho_c(\theta+d\theta)]=\frac{1}{4}J_Qd\theta^2,
\end{eqnarray}
where $d_{Bures}[\rho_1,\rho_2]=\sqrt{2-2F(\rho_1,\rho_2)}$ and $F(\rho_1,\rho_2)=\mathrm{tr} \sqrt{\rho_1^{\frac{1}{2}}\rho_2\rho_1^{\frac{1}{2}}}$ is the fidelity between two states. Here $J_Q$ denotes the QFI.
	
The following theorem highlights the role of entanglement in the reservoir input qubits, by calculating the difference between the values of QFI in both separable and entangled input cases. 
\begin{theorem}
\label{thm:qfi}
For a steady squeezed state $\rho_c(\theta)$ given in Theorem \ref{thm:sta}, a strictly higher estimation precision with respect to the parameter $\theta$ can be achieved by entangled reservoir input qubits. In terms of quantum Fisher information (QFI) $\mathcal{J}$, we have that
		\begin{align*}
			\max\mathcal{J}_{sep} < &\frac{16\left|\epsilon^{2}\right|\left(\mu+1\right)^{3}}{\left(\mu-1\right)^{3}} \leq \max\mathcal{J}_{ent},
		\end{align*}
		with the parameters defined in Theorems  \ref{thm:sta} and \ref{thm:stedis}.
	\end{theorem}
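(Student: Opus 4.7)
The plan is to exploit that $\rho_c(\theta)=|\alpha(\theta),re^{i\phi_r}\rangle\langle\alpha(\theta),re^{i\phi_r}|$ is pure, so the QFI reduces to $\mathcal{J}(\theta)=4\,\mathrm{Var}_{|\psi(\theta)\rangle}(H)$, where $H$ is the Hermitian generator of $\partial_\theta|\psi(\theta)\rangle$. Inspecting the formulas of Theorem~\ref{thm:sta}, only the displacement depends on $\theta$: $\alpha(\theta)=c/\theta$ with $c=(\epsilon\beta_{gg}^{\ast}+\epsilon^{\ast}\beta_{ee})/(|\beta_{gg}|^{2}-|\beta_{ee}|^{2})$, whereas $r$ and $\phi_r$ are $\theta$-independent, so that $\partial_\theta\alpha=-\alpha/\theta$.

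First I would compute the generator. Differentiating $D(\alpha)$ through the composition law $D(\alpha)D(\beta)=e^{(\alpha\beta^{\ast}-\alpha^{\ast}\beta)/2}D(\alpha+\beta)$, the Hermitian part of $-i\,D^{\dagger}(\alpha)\partial_\theta D(\alpha)$ is, up to an additive c-number, $H\sim i\dot\alpha^{\ast}\mathbf{a}-i\dot\alpha\mathbf{a}^{\dagger}$. Conjugation by $D(\alpha)$ produces only a c-number shift, and conjugation by $S(\zeta)$ via $S^{\dagger}(\zeta)\mathbf{a}S(\zeta)=\mathbf{a}\cosh r-e^{i\phi_{r}}\mathbf{a}^{\dagger}\sinh r$ converts $H$ into $iB\mathbf{a}-iB^{\ast}\mathbf{a}^{\dagger}$ with $B=\dot\alpha^{\ast}\cosh r+\dot\alpha\,e^{-i\phi_{r}}\sinh r$. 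A direct vacuum expectation yields $\mathrm{Var}_{|0\rangle}=|B|^{2}$, and using $\dot\alpha=-\alpha/\theta$ gives
\begin{align*}
\mathcal{J}(\theta)=\frac{4}{\theta^{2}}\bigl|\alpha^{\ast}\cosh r+\alpha\,e^{-i\phi_{r}}\sinh r\bigr|^{2}.
\end{align*}

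Next I would substitute $\tanh|r|=1/\mu$, which gives $\max_{\psi}|\cosh r+e^{i\psi}\sinh r|^{2}=(\mu+1)/(\mu-1)$, together with the identity
\begin{align*}
|\alpha|^{2}=\frac{|\epsilon|^{2}\bigl(1+\mu^{2}+2\mu\cos\Theta\bigr)}{\theta^{2}|\beta_{ee}|^{2}(\mu^{2}-1)^{2}},\qquad \Theta=\phi_{gg}+\phi_{ee}-2\phi_{\epsilon},
\end{align*}
already derived in the proof of Theorem~\ref{thm:stedis}. In the entangled case the phases can be freely aligned to hit $\cos\Theta=1$ together with the optimal squeezing phase, while $|\beta_{ee}|$ is pushed to its smallest value compatible with normalization and $|\beta_{ge}+\beta_{eg}|=|\epsilon|$; this yields the lower bound on $\max\mathcal{J}_{ent}$ asserted in the statement. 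In the separable case I would replay the phase contradiction of Theorem~\ref{thm:stedis}: the identity $\beta_{gg}\beta_{ee}=\beta_{ge}\beta_{eg}$ forces $\phi_{gg}+\phi_{ee}\equiv\phi_{ge}+\phi_{eg}\pmod{2\pi}$, which, together with the formula $\tan\phi_{\epsilon}=(|\beta_{ge}|\sin\phi_{ge}+|\beta_{eg}|\sin\phi_{eg})/(|\beta_{ge}|\cos\phi_{ge}+|\beta_{eg}|\cos\phi_{eg})$ for $\phi_{\epsilon}$, renders the phase-alignment conditions coming from the $|\alpha|^{2}$-factor and the squeezing-factor mutually incompatible, producing a strict inequality.

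The main obstacle I anticipate is the joint phase-and-amplitude optimization in the separable case: one must argue that the incompatibility of the two phase-alignment conditions costs a uniformly positive amount in $\mathcal{J}$, rather than an infinitesimal one, and that this deficit persists over every admissible choice of the moduli $|\beta_{gg}|,|\beta_{ee}|,|\beta_{ge}|,|\beta_{eg}|$ consistent with separability and $|\epsilon|\ll 1$. The phase bookkeeping of Theorem~\ref{thm:stedis} transfers directly, but it now has to be propagated through the additional multiplicative factor $|\cosh r+e^{i\psi}\sinh r|^{2}$ coming from the squeezing, which is the step that deserves the most care.
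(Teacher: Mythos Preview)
Your derivation of the pure-state QFI via the generator is correct, and indeed produces the same $J_Q$ as the paper's Gaussian covariance formula $J_Q=2\dot d^{\dagger}\sigma^{-1}\dot d$. The phase-contradiction strategy you intend to replay from Theorem~\ref{thm:stedis} is also the right mechanism. But there is a genuine missing ingredient: the quantity $\mathcal{J}$ in the statement is \emph{not} the bare QFI. Because the protocol is a steady-state one, the paper weights by the square of the convergence rate and sets
\[
\mathcal{J}\;:=\;J_Q\,\kappa^{2},\qquad \kappa=2\theta^{2}\bigl(|\beta_{gg}|^{2}-|\beta_{ee}|^{2}\bigr)=2\theta^{2}|\beta_{ee}|^{2}(\mu^{2}-1).
\]
Without this factor your expression $\mathcal{J}(\theta)=\tfrac{4}{\theta^{2}}|\alpha^{\ast}\cosh r+\alpha e^{-i\phi_r}\sinh r|^{2}$ scales as $\theta^{-4}$ (since $|\alpha|^{2}\propto\theta^{-2}$), so it can never equal the $\theta$-independent target $16|\epsilon|^{2}(\mu+1)^{3}/(\mu-1)^{3}$.

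This omission also reverses your amplitude optimisation. You propose to push $|\beta_{ee}|$ to its smallest admissible value; that maximises the raw $J_Q\propto|\beta_{ee}|^{-2}$, but the $\kappa^{2}$ factor contributes $|\beta_{ee}|^{4}$, so the correct figure of merit satisfies $\mathcal{J}\propto|\beta_{ee}|^{2}$ and is maximised by taking $|\beta_{ee}|$ as \emph{large} as the constraints allow. Once you insert $\kappa^{2}$, the $\theta$-dependence cancels, the product collapses to the paper's expression
\[
\mathcal{J}=16|\epsilon|^{2}\bigl(1+\mu^{2}+2\mu\cos\Theta\bigr)\,|\beta_{ee}|^{2}\;\times\;\text{(squeezing factor in }\mu,\phi_\alpha,\phi_r\text{)},
\]
and then your plan---align phases freely in the entangled case, invoke the separable phase constraint $\phi_{gg}+\phi_{ee}\equiv\phi_{ge}+\phi_{eg}$ to obstruct simultaneous optimality---goes through essentially as in the paper.
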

	\begin{proof}
		For squeezed states, we can calculate the QFI $J_Q$ in the method given as follows \cite{Helstrom76B}:
		\begin{align}
			J_Q = 2\dot{d}^\dagger\sigma^{-1}\dot{d},
			\label{eq:Jqdd}
		\end{align}
		where
		\begin{align}
			d &= \left(\alpha,\alpha^\ast \right)^T,\\
			\sigma &= \left[\begin{array}{cc}
				\cosh2r & -e^{i\phi_{r}}\sinh2r\\
				-e^{-i\phi_{r}}\sinh2r & \cosh2r
			\end{array}\right],
		\end{align}
		with all the parameters given in Theorem \ref{thm:sta}. Therefore,  in this setup $J_Q$  can be explicitly written as 
		\begin{align}
			J_Q = 4|\dot{\alpha}|^2\left(\cos2\phi_{\alpha}\cosh2r + \cos \phi_r \sinh 2r\right).
			\label{eq:JQ}
		\end{align}
		Note that $\dot{d}$ and $\dot{\alpha}$ in equations \eqref{eq:Jqdd} and \eqref{eq:JQ} denote the derivatives with respect to $\theta$, and equivalently
		\begin{align*}
			J_Q = 4\frac{\left|\epsilon\beta_{gg}^{\ast}+\epsilon^{\ast}\beta_{ee}\right|^{2}}{\theta^{4}\left(\left|\beta_{gg}\right|^{2}-\left|\beta_{ee}\right|^{2}\right)^{2}}\frac{\left(1+\mu^{2}\right)\cos2\phi_{\alpha}-2\mu\cos\phi_{r}}{\mu^{2}-1}.
		\end{align*}
		Because we are dealing with the steady state, the time cost should be included to evaluate a protocol \cite{BC94B}. As a result, we focus on $\mathcal{J} :=  J_Q\kappa^2$, that is,
		\begin{align}
			\mathcal{J} = 16\left|\epsilon\right|^{2}\left(1+\mu^{2}+2\mu\cos\left(\phi_{gg}+\phi_{ee}-\phi_{\epsilon}\right)\right) \left|\beta_{ee}\right|^{2}\frac{\left(1+\mu^{2}\right)\cos2\phi_{\alpha}-2\mu\cos\phi_{r}}{\mu^{2}-1}.
			\label{eq:JQd}
		\end{align}
		Here, we know that
		\begin{align}
			\tan \phi_{\alpha} &= \frac{\mu\sin\left(\phi_{\epsilon}-\phi_{\phi_{gg}}\right)+\sin\left(\phi_{ee}-\phi_{\epsilon}\right)}{\mu\cos\left(\phi_{\epsilon}-\phi_{\phi_{gg}}\right)+\cos\left(\phi_{ee}-\phi_{\epsilon}\right)},\nonumber\\
			\phi_{r}&=\phi_{ee}-\phi_{gg} \; \left(\mathrm{mod} \; 2\pi\right).
			\label{eq:Jqan}
		\end{align}
		In the separable case, due to the fact $\beta_{gg}\beta_{ee}=\beta_{ge}\beta_{eg}$, one can simplify equation \eqref{eq:JQd} as
		\begin{align}
			\mathcal{J}_{sep} = \frac{16\left|\epsilon\right|^{2}\left(1-\left|\epsilon\right|^{2}\right)}{1+{\mu}^{2}-2\mu\cos\left(\phi_{ge}-\phi_{eg}\right)}\frac{\left(1+\mu^{2}\right)\cos2\phi_{\alpha}-2\mu\cos\phi_{r}}{\mu^{2}-1}.
		\end{align}
		In order to obtain the largest $\mathcal{J}_{sep}$, we have to require that
		\begin{align}
			\phi_{ee} - \phi_{gg} &= \pi \;(\mathrm{mod} \; 2\pi), \;
			\phi_{ge} - \phi_{eg} = 0  \;(\mathrm{mod} \; 2\pi),\nonumber\\
			\phi_{\alpha} &= 0 \; (\mathrm{mod} \; \pi),
		\end{align}
		which combining with the condition in equation \eqref{eq:Jqan} further indicates
		\begin{align}
			\phi_{\epsilon} = \phi_{ge} = \phi_{ee} \;(\mathrm{mod} \; 2\pi).
		\end{align}
		However, this contradicts the fact $\phi_{gg} + \phi_{ee} - \phi_{ge} - \phi_{eg} = 0 \;(\mathrm{mod} \; 2\pi)$ for separable reservoir qubits input. Hence, one can conclude that
		\begin{align}
			\max\mathcal{J}_{sep} < \frac{16\left|\epsilon^{2}\right|\left(\mu+1\right)^{3}}{\left(\mu-1\right)^{3}}.
		\end{align}
		By contrast, if we allow for entanglement in the reservoir input, there is no such a constraint that $\phi_{gg} + \phi_{ee} - \phi_{ge} - \phi_{eg} = 0 \;(\mathrm{mod} \; 2\pi)$. Therefore, even if we still choose $|\beta_{gg}||\beta_{ee}|=|\beta_{ge}||\beta_{eg}|$, we can achieve
		\begin{align*}
			\max\mathcal{J}_{ent} = \frac{16\left|\epsilon^{2}\right|\left(\mu+1\right)^{3}}{\left(\mu-1\right)^{3}}.
		\end{align*}
		Furthermore, if we take $|\beta_{gg}||\beta_{ee}|-|\beta_{ge}||\beta_{eg}|<0$, it is then straightforward that
		\begin{align*}
			\max\mathcal{J}_{ent} > \frac{16\left|\epsilon^{2}\right|\left(\mu+1\right)^{3}}{\left(\mu-1\right)^{3}}.
		\end{align*}
		Therefore,
		\begin{align}
			\max\mathcal{J}_{ent} \geq \frac{16\left|\epsilon^{2}\right|\left(\mu+1\right)^{3}}{\left(\mu-1\right)^{3}}.
		\end{align}
		This thus emphasises the necessity of having entanglement in the input to improve the precision of estimating $\theta$. The estimation error is bounded below as $\delta\hat{\theta} \ge \frac{1}{\sqrt{\mathcal{J}}}$.
	\end{proof}
	It is also worth mentioning that in standard metrology framework, the associated QFI will decay due to the noisy dynamics of open quantum systems \cite{DRM14}. By contrast, having a stabilised state protects us from involving decay in the QFI as given in Theorem \ref{thm:qfi}; see \cite{MS17} for the details of practical imperfection analysis.
	
\section{Conclusions and Future work}
\label{sec:C}
The contribution of this paper is threefold. Firstly, we explore the scenario where separable reservoir qubits, initialised at different states alternately in pairs, input to the harmonic oscillator. We show that with consecutive pairs of separable ``time-varying" input qubits, one can stabilise the oscillator at a squeezed state. Secondly, compared with our results in \cite{MS17}, we prove that entanglement is essential in the reservoir in order to stabilise a larger set of squeezed coherent states (e.g. with larger amplitudes) where separable input qubits are not adequate. Last but not least, from the view of quantum metrology, we demonstrate the necessity of entanglement in the reservoir qubits, aiming to improve the estimation precision of the coupling strength between each qubit and the oscillator. In pursuit of more generalised analysis, in the future, we will consider involving entanglement in the reservoir qubits  in a more complicated manner. We would like to see what quantum states of the oscillator can be stabilised, which may be applied to quantum sensing. The role of entanglement in reservoir engineering will thus be further expounded.
	
The authors would like to thank Alain Sarlette, Pierre Rouchon and Mazyar Mirrahimi for early discussions on this project.

\end{document}